\renewcommand{\l}{\left}
\renewcommand{\r}{\right}
\newcommand{\mc}{\mathcal}
\newcommand{\1}{\mathds{1}}
\newcommand{\R}{\mathbb{R}}
\def\beq{\begin{equation}}
\def\eeq{\end{equation}}
\def\ba{\begin{array}}
\def\ea{\end{array}}
\newtheorem{theorem}{Theorem}
\newtheorem{proposition}[theorem]{Proposition}
\newtheorem{lemma}[theorem]{Lemma}
\newtheorem{corollary}[theorem]{Corollary}
\newtheorem{example}{Example}
\newtheorem{remark}{Remark}
\newtheorem{assumption}{Assumption}
\title{\LARGE \bf
Targeting interventions for displacement minimization in opinion dynamics}
\author{Luca Damonte, Giacomo Como, and Fabio Fagnani
\thanks{The authors are with the Department of Mathematical Sciences ``G.L.~Lagrange,'' Politecnico di Torino, Corso Duca degli Abruzzi 24, 10129, Torino, Italy.  Email: luca.damonte@polito.it, giacomo.como@polito.it, fabio.fagnani@polito.it. Giacomo Como is also with the Department of Automatic Control, Lund University, Box 118
SE-221 00 Lund, Sweden. }
}
\begin{document}

\maketitle
\thispagestyle{empty}
\pagestyle{empty}

\begin{abstract} Social influence is largely recognized as a key factor in opinion formation processes. 
Recently, the role of external forces in inducing opinion displacement and polarization in social networks has attracted significant attention. 
This is in particular motivated by the necessity to understand and possibly prevent interference phenomena during political campaigns and elections. 
In this paper, we formulate and solve a targeted intervention problem for opinion displacement minimization on a social network.
Specifically, we consider a min-max problem whereby a social planner (the defender) aims at selecting the optimal network intervention within her given budget constraint in order to minimize the opinion displacement in the system that an adversary (the attacker) is instead trying to maximize. 
Our results show that the optimal intervention of the defender has two regimes. 
For large enough budget, the optimal intervention of the social planner acts on all nodes proportionally to a new notion of network centrality. 
For lower budget values, such optimal intervention has a more delicate structure and is rather concentrated on a few target individuals.
\end{abstract}

\begin{keywords}
Opinion dynamics, Target interventions, Network systems, Disagreement, Polarization. 
\end{keywords}


\section{Introduction}
The possibility to interfere in the democratic life of a society so to change the outcome of a political campaign or of an election, to alter the balance of power on a social issue, to legitimate or delegitimate a position is a concrete unsettling reality. 
Automatic programs, often referred to as "bots", are increasingly used to manipulate debates in social networks and to spread fake news and have become more sophisticated and harder to distinguish from real people. 

Over the past fifteen years, a large body of literature  has focused on the impact of the social network structure on shaping the emergent opinion profile in a society. 
Phenomena such as \emph{consensus, polarization}, or \emph{persistent disagreement} have been studied in this context, as well as the role of targeted interventions in shifting individuals' opinions in a desired direction.  

A fundamental model used to investigate these phenomena is the French-DeGroot \cite{degroot1974reaching} linear averaging model and its subsequent extensions including nonlinearities in the updating laws (e.g., bounded confidence models \cite{lorenz2007continuous}), randomness in the interactions, exogenous inputs (e.g., stubborn nodes \cite{Como.Fagnani:2016}). 
A celebrated model that belongs to this last research line is due to Friedkin-Johnsen \cite{friedkin1990social}. 

Building on such opinion dynamics models, various works in the recent past have studied the interaction between the network structure and the effect of external influences and formulated targeting optimization problems. 
In \cite{yildiz2013binary} and  \cite{vassio2014message}, the problem of optimizing the position of a stubborn agent to maximally offset the effect of other influencing agents has been investigated. 
A corresponding adversarial problem formulated as a zero-sum game has been proposed and analyzed in \cite{dhamal2018optimal}. 

For the special setting of the Friedkin-Johnsen opinion dynamics model, the work \cite{gaitonde2020adversarial} studies an adversarial problem where the first player (the attacker) can manipulate the initial opinions of the agents, while the second player (the defender) applies a defense action weakening the action of the first one (mathematically increasing the cost of the attacker). 
Specifically, the authors consider a family of min-max problems for a variety of performance indices (e.g. disagreement, displacement), cost norms and budget constraints for which they propose algorithms and study performance bounds.

The complementary problem concerning the identification and detection of sources of exogenous influence in a social network from the analysis of opinion trajectories over time has also been studied \cite{ravazzi2021ergodic}. 

In this paper, for a general linear influence model encompassing the French-DeGroot averaging model with exogenous inputs, we analyze an adversarial model in the style of the one considered in \cite{gaitonde2020adversarial}. 
Attacker and defender pair their actions at the level of the exogenous inputs, the first one choosing the offset values to apply, the second one choosing the mitigating factors to counteract. 
The considered  performance measure is the absolute displacement from an equilibrium condition. 

The main contribution of this paper is an explicit recursive solution of the min-max problem that shows how the optimal solution for the defender is to invest its mitigation resources on all nodes if her available budget is sufficiently high, or on just a subset of selected nodes if the budget is low.  Emerging from our analysis is a novel network centrality measure on the set of inputs that indicates which are the most influential inputs for the defender to focus her intervention efforts.  
Our results apply, in particular, to the problem analyzed in \cite{gaitonde2020adversarial} for the special case of the absolute displacement.

The problem of targeting intervention ---namely understanding on which nodes in a network an external action has the maximum effect--- is a fundamental question in a variety of contexts encompassing opinion dynamics. 
In quadratic game theory, the work \cite{ballester2006who} coined the term \emph{key players} to indicate the players whose removal, in a quadratic game, would mostly affect the aggregate performance of the system. 
A different intervention aimed at changing the payoff function of prescribed players has been discussed, again in the context of quadratic games (and more general linear best reply games)  in \cite{galeotti2020targeting} and \cite{demange2017optimal}. 
In all these target intervention problems, the analysis leads to a specific concept of centrality: the most central nodes are those where it is more convenient to exert the action. 
The centralities obtained in the above referenced papers are variations of the popular Bonacich centrality. 
Adversarial settings have been instead proposed in \cite{chen2018quantifying} and \cite{damonte2020systemic}. 
The centrality proposed in this paper is rooted on the set of inputs acting on the network and is thus substantially different from the Bonacich centrality.
Optimal targeting problems have also been recently addressed in \cite{Hunter:2022} in the context of opinion dynamics with stubborn agents and in \cite{Como2021targeting} in the context of binary supermodular games. 

In the last subsection of this section we present some basic notation used throughout the paper. 
The rest of the paper is organized as follows. 
In Section \ref{sec:model} we introduce the model and the min-max optimization problem. 
A relevant application, concerning the Friedkin-Johnsen opinion model, is also introduced. 
In Section \ref{sec:results} we present our main results concerning the aforementioned problem. 
In Section \ref{sec:examples} we show some examples and simulations. In a conclusive section we present some possible future developments.

\textbf{Notation:} $\R_+$ and $\R_{++}$ indicate the set of nonnegative and positive real numbers, respectively. 
The transpose of matrix $M$ is denoted by $M'$. 
Given a vector $x$ in $\R^n$, we denote with the symbol $[x]$ the diagonal matrix such that $[x]_{ii}=x_i$.

\section{Model and problem statement}\label{sec:model}

We consider discrete-time linear time-invariant dynamics \beq\label{dynamic} x(t+1)=Ax(t)+Bu\,,\qquad t\ge 0\,,\eeq where $A$ in $\R_+^{n\times n}$ and $B$ in $\R_+^{n\times m}$ are nonnegative matrices, $u$ in $\R^m$ is a constant input vector, and the vector $x(t)$ in $\R^n$ represents the system state vector at time $t\ge0$. 
Let $\mc R=\{1,2,\ldots,n\}$ be a set of (regular) agents and $\mc S=\{n+1,n+2,\ldots,n+m\}$ a set of exogenous sources (sometimes also referred to as stubborn agents in the literature \cite{acemouglu2013opinion}). 
We may then interpret the matrix $$G=\left(\ba{rcl}A&B\\0&I\ea\right)$$ as the weighted adjacency matrix of a directed graph $\mc G=(\mc V,\mc E)$ with node set $\mc V=\mc R\cup\mc S$, and link set $\mc E=\{(i,j)\;|\; G_{ij}>0\}$. 
We notice that, in this graph, sources have no outgoing links (they are sinks in the graph terminology). 
A link from a regular agent $i$ to a source $k$ indicates a direct positive influence exerted by the exogenous source $k$ on agent $i$. 
A walk from a regular agent $i$ to a source $k$ indicates instead an indirect influence mediated by other regular agents.
The graph of regular agents, described by matrix $A$, admits self-loops.

Throughout the paper we shall work under the following assumption. 
\begin{assumption}\label{assumption:A-Schur}
The matrix $A$ is Schur stable, i.e., it has spectral radius $\rho(A)<1$. 
\end{assumption}\medskip
It is well known that Assumption \ref{assumption:A-Schur} implies that, for every initial state $x(0)$ in $\R^n$, the dynamics \eqref{dynamic} converge to the globally exponentially stable equilibrium point
\beq\label{equilibrium} x=Mu\,,\eeq 
where
\beq\label{def:M}M=(I-A)^{-1}B\,.\eeq
We shall assume that the input vector $u$ has the form \beq\label{omega-nu} u=[\nu]^{-1/2}\omega,\eeq
where $\omega$ in $\R^m$ is an exogenous input, while $\nu$ in $\R_{++}^m$ is a protection vector that aims at attenuating the effect of the sources.
We measure the system performance in terms of the quadratic displacement 
\beq\label{displacement} \Phi(\nu,\omega)=||x||^2=\omega'\l[\nu\r]^{-1/2}M'M\l[\nu\r]^{-1/2}\omega\,,\eeq
and consider the following min-max optimization problem:
\beq\label{minmax}\min\limits_{\substack{\nu\ge d\\\sum_i\nu_i\le c}}\max\limits_{\|\omega\|_2\le 1}\; \Phi(\nu,\omega)\,,\eeq
with $d$ in $\R_{++}^m$ and $c\ge\sum_id_i$. 
The interpretation is that of an adversarial model where two agents, an exogenous disturber and a system defender, compete. 
The former seeks to maximize the displacement from the  equilibrium point $0$ (that would be obtained if $\omega=0$), whereas the latter aims at reducing such displacement. 
The action of the disturber has been bounded by $1$ in Euclidean norm. 
This does not entail any loss of generality as a different bound can be absorbed in the bound of the defender. 
The defender has a limited budget $c$ and moreover has a lower bound $d$ on the single components. 
Typically, we will assume that $d_i\geq 1$ that means that the defender can not amplify the disturber action in any component.

\subsection{Example (Friedkin-Johnsen dynamics)}\label{sec:FriedkinJohnsen}
The Friedkin-Johnsen model \cite{friedkin1990social} in opinion dynamics can be described by $$x(t+1)=[\lambda] Px(t)+(I-[\lambda])u,$$ where $P$ is a stochastic nonnegative matrix of influences and $[\lambda]$ is a diagonal matrix such that $\lambda_{i}\in[0,1]$ for every $i$. 
In this context, input $u_i$ is referred to as the anchor of agent $i$ and is typically taken to be the original belief of the agent, namely $u=x(0)$. 
The vector $x(t)$ in $\R^n$ collects the opinions that a set of agents hold on a certain fact at time $t$. 

Clearly, this fits in the general model (\ref{dynamic}) with $A=[\lambda] P$ and $B=(I-[\lambda])$. 
If in the graph $\mc G$ every regular node has a walk to a source (this is automatic if $\lambda_{i}<1$ for every $i$), it can be shown that $A$ is Schur stable. 
For this specific model, the authors of \cite{gaitonde2020adversarial}  study the min-max problem 
\beq\label{minmax gaitonde} \min\limits_{\substack{\nu>0\\\sum_i\nu_i= n}}\;\max\limits_{\substack{u\in\R^n\\\|[u]\nu^{1/2}\|^2\leq 1}}\; u'M'Mu.\eeq
In this model, the action of the defender modifies the cost norm of the attacker.
The adversarial interpretation can be that the action of the defender makes certain nodes less vulnerable and, in order to produce the same result, the attacker needs to put on a stronger more costy effort. 
We notice that making the change of variable $\sqrt{\nu_i}u_i=\omega_i$ and setting $c=n$ and $d=0$, the above problem can be equivalently expressed in the form (\ref{minmax}).

In the next section we investigate the solution of such problems and study the nature of optimal weights, denoted $\nu^*$, that solves them. 
We highlight that the main goal of our study is to exactly characterize the optimal intervention of the defender and understand how the budget should be optimally allocated through agents. 
Therefore, our study is a natural extension of \cite{gaitonde2020adversarial}, where an algorithm to solve a much more general problem is given but no information on the optimal solution is investigated.

\section{Main results}\label{sec:results}
In this section, we present the most important results of the paper. 

First, notice that Assumption \ref{assumption:A-Schur} allows one to express the matrix $M$ defined in \eqref{def:M} as the limit of the matrix series 
\beq\label{Mneumann} M=\sum_{k=0}^{+\infty}A^kB\,.\eeq
From the above one can easily deduce that $M_{ij}\ge0$ and $M_{ij}>0$ if and only if the exogenous source $j$ in $\mc S$ is reachable from agent $i$ in $\mc V$ in the graph $\mc G$. 
It then follows that 
\beq\label{def:H} H=M'M\eeq
is a nonnegative symmetric matrix such that $H_{ij}>0$ if and only if there exists at least a regular agent $k$ from which both sources $i$ and $j$ are reachable in the graph $\mc G$. 
In fact, we may interpret $H$ as the weighted adjacency matrix of a new undirected graph $\mc H=(\mc S,\mc F)$ with node set $\mc S$, whereby $i$ and $j$ are linked by an undirected link of weight $H_{ij}$ if and only if there exists at least a regular agent $k$ that is (possibly indirectly) influenced by both sources $i$ and $j$ in the original network $\mc G$. 
We shall denote by 
\beq\label{centrality}\pi=\frac1{\sum_{i,j}H_{ij}}H\1\,,\eeq 
the normalized degree centrality vector of this graph.

We shall make the following assumption. 
\begin{assumption}\label{assumption:Hirreducible}
The matrix $H$ is irreducible. 
\end{assumption}\medskip

\begin{remark}
Assumption \ref{assumption:Hirreducible} is equivalent to requiring that the graph $\mc H$ is connected. 
Notice that a sufficient condition for Assumption \ref{assumption:Hirreducible} to be satisfied is that $A+A'$ is irreducible.
From a network connectivity point of view, irreducibility of the matrix $A+A'$ is equivalent to that the graph $\mc G$ be weakly connected, that is, replacing all directed edges with undirected ones generates a connected graph. However, notice that Assumption \ref{assumption:Hirreducible} is a weaker requirement than the graph $\mc G$ being weakly connected. 
\end{remark}

We first set some further notation. 
Let a positive vector $d$ in $\R^m_{++}$ be given. 
For every scalar $c\geq\1'd$, we define the set of admissible protection vectors as
\beq\label{Qc} \mc Q_{c}=\{\nu\in\R^m\;|\; \nu_i\geq d_i, \; \1'\nu\leq c\}.\eeq 
We approach the min-max problem (\ref{minmax}) by first analyzing the inner minimization problem. 
To this aim, we define the function $\phi: \mc Q_{c}\to\R$ as
$$\phi(\nu)=\max_{\|\omega\|_2=1}\; \Phi(\nu,\omega),\quad\nu\in \mc Q_{c}.$$ 
The next result gathers some important facts on the function $\phi$.
\begin{lemma}\label{lemma:delta}
Let $A$ in $\R_+^{n\times n}$ and $B$ in $\R_+^{n\times m}$ satisfy Assumption \ref{assumption:A-Schur}. 
Let $d$ in $\R_{++}^m$ and $c\ge\1'd$. 
Then, for every $\nu$ in $\mc Q_{c}$, we have 
\begin{enumerate}
\item[(i)] $\phi(\nu)=\rho\l([\nu]^{-1/2}H[\nu]^{-1/2}\r)=\rho\l(M\l[\nu\r]^{-1}M'\r)$. 
\end{enumerate}
Moreover, if also Assumption \ref{assumption:Hirreducible} is satisfied, then:
\begin{enumerate}
\item[(ii)] $\phi(\nu)$ is a simple eigenvalue of $[\nu]^{-1/2}H[\nu]^{-1/2}$
\item[(iii)] $\phi(\nu)$ is strictly convex in $\nu$.
\item[(iv)]
\beq\label{derivative} \frac{\partial}{\partial \nu_i}\phi(\nu)=-(M'z)_i^2/\nu_i^2\,,\eeq
where $z$ is the dominant eigenvector of $M\l[\nu\r]^{-1}M'$ associated to the eigenvalue $\phi(\nu)$.
\end{enumerate}
\end{lemma}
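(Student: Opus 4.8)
The plan is to handle the four claims in sequence, exploiting the two spectral representations of $\phi$ together with Perron--Frobenius theory and first-order eigenvalue perturbation. For (i), observe that for fixed $\nu$ the map $\omega\mapsto\Phi(\nu,\omega)$ is the quadratic form associated with the symmetric positive semidefinite matrix $S:=[\nu]^{-1/2}H[\nu]^{-1/2}$, so maximizing it over the unit sphere returns its largest eigenvalue, which for a symmetric PSD matrix equals $\rho(S)$. The identity $\rho(S)=\rho(M[\nu]^{-1}M')$ then follows from the elementary fact that $PQ$ and $QP$ share the same nonzero spectrum, applied with $P=[\nu]^{-1/2}M'$ and $Q=M[\nu]^{-1/2}$. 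For (ii), I would note that $S$ is nonnegative with the same zero/nonzero pattern as $H$, since pre- and post-multiplication by the positive diagonal $[\nu]^{-1/2}$ preserves that pattern; hence $S$ is irreducible under Assumption~\ref{assumption:Hirreducible}, and Perron--Frobenius guarantees that $\rho(S)=\phi(\nu)$ is a simple eigenvalue with a strictly positive eigenvector $w$. This positivity is the engine of part (iii).

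For (iii) and (iv) I would work with $T:=M[\nu]^{-1}M'$ and the representation $\phi(\nu)=\max_{\|z\|=1}\sum_i(M'z)_i^2/\nu_i$. For each fixed unit $z$ the summand $(M'z)_i^2/\nu_i$ is convex in $\nu$, so $\phi$ is a supremum of convex functions and therefore convex; upgrading to strict convexity is the real work. Take $\nu^{(0)}\neq\nu^{(1)}$ in $\mc Q_c$, set $\nu^{(t)}=(1-t)\nu^{(0)}+t\nu^{(1)}$ with $t\in(0,1)$, and let $z^\star$ be a normalized dominant eigenvector of $T$ at $\nu^{(t)}$. The decisive observation is that $(M'z^\star)_i\neq0$ for every $i$: relating $z^\star$ to the Perron eigenvector $w$ of $S$ at $\nu^{(t)}$ gives $(M'z^\star)_i=\phi(\nu^{(t)})\,\nu_i^{1/2}w_i$, which is nonzero since $\phi(\nu^{(t)})>0$ and $w>0$. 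Because $s\mapsto(M'z^\star)_i^2/s$ is then strictly convex for every $i$, and $\nu^{(0)},\nu^{(1)}$ differ in at least one coordinate, evaluating $\sum_i(M'z^\star)_i^2/\nu_i$ at $\nu^{(t)}$ and bounding its values at $\nu^{(0)},\nu^{(1)}$ from above by $\phi(\nu^{(0)}),\phi(\nu^{(1)})$ via the Rayleigh quotient delivers the strict inequality. For (iv), simplicity of the eigenvalue from (ii) makes $\phi$ differentiable, and the Hellmann--Feynman formula yields $\partial_{\nu_i}\phi(\nu)=z'\big(\partial_{\nu_i}T\big)z$ for the normalized eigenvector $z$; since $\partial_{\nu_i}[\nu]^{-1}=-\nu_i^{-2}e_ie_i'$, this evaluates to $-(M'z)_i^2/\nu_i^2$.

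The main obstacle is the strict convexity in (iii): plain convexity is immediate, but excluding flat directions requires the non-vanishing of every coordinate $(M'z^\star)_i$, which is precisely where irreducibility of $H$ ---and hence strict positivity of the Perron eigenvector--- becomes indispensable. I would take care with the bookkeeping relating the eigenvectors of $S$ and of $T$, since it is exactly this step that converts the abstract positivity of $w$ into the concrete coefficient bound that drives the argument.
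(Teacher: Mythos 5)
Your proof is correct, and parts (i) and (ii) follow the paper's own route exactly: the Rayleigh-quotient (Courant--Fischer) characterization plus invariance of the nonzero spectrum under swapping factors for (i), and pattern preservation under positive diagonal congruence plus Perron--Frobenius for (ii). Where you genuinely diverge is in (iii) and (iv). For plain convexity the paper writes $\phi(\nu)=f(\nu^{-1})$ with $f(s)=\rho(M[s]M')$ and composes the monotone convex $f$ with componentwise inversion, whereas you read $\phi(\nu)=\max_{\|z\|=1}\sum_i (M'z)_i^2/\nu_i$ directly as a supremum of functions convex in $\nu$ --- the same underlying idea, one step shorter. The substantive difference is strict convexity: the paper asserts that $f$ is strictly increasing because every row of $M$ has a positive entry, a claim that is neither justified under the stated assumptions nor the relevant mechanism; your argument --- any dominant eigenvector $z^\star$ of $M[\nu]^{-1}M'$ satisfies $[\nu]^{-1/2}M'z^\star\propto w$ with $w>0$ the Perron vector of $[\nu]^{-1/2}H[\nu]^{-1/2}$, hence $(M'z^\star)_i\neq 0$ for every $i$ --- is exactly how Assumption \ref{assumption:Hirreducible} should enter, and it makes the strict inequality airtight where the paper's version is terse. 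For (iv) the paper invokes the implicit function theorem and delegates the computation to a reference, while you supply the standard Hellmann--Feynman derivation $\partial_{\nu_i}\phi=z'(\partial_{\nu_i}T)z=-(M'z)_i^2/\nu_i^2$, which is more self-contained. The one point you should state explicitly rather than gloss over is that simplicity of $\phi(\nu)$ as an eigenvalue of $[\nu]^{-1/2}H[\nu]^{-1/2}$ (part (ii)) transfers to $M[\nu]^{-1}M'$, since nonzero eigenvalues of $PQ$ and $QP$ agree with multiplicity; the perturbation formula in (iv) is applied to the latter matrix, so this transfer is what licenses differentiability there.
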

\medskip

Since $ \mc Q_{c}$ is convex, Lemma \ref{lemma:delta} implies that the minimum of $\phi$ on such a set is unique. 
We denote it by $\nu^*(c)$ to remember its dependence on the budget $c$.

We first solve the min-max problem relatively to the unconstrained case where we drop the lower bound conditions expressed through $d$. 
We define, for every positive scalar $c$,
\beq\label{Qc0} \mc Q^0_{c}=\{\nu\in\R^m\;|\; \nu_i>0, \; \1'\nu\leq c\},\eeq
and we consider 
\beq\label{minmax0}\min\limits_{\substack{\nu\in\mc Q^0_c}}\max\limits_{\|\omega\|_2\le 1}\; \Phi(\nu,\omega)=\min\limits_{\substack{\nu\in\mc Q^0_c}}\phi(\nu)\,.\eeq
From Lemma \ref{lemma:delta} we deduce that the minimum above is unique and we indicate it as $\nu^0(c)$. 
An explicit form of $\nu^0(c)$ is presented below.
\begin{proposition}\label{prop:unconst} 
For every $c>0$, we have that $$\nu^0(c)=c\pi,\quad \min\limits_{\substack{\nu\in\mc Q_c^0}}\phi(\nu)=\phi(c\pi)=\frac{\1'H\1}{c}.$$
\end{proposition}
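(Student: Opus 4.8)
The plan is to exploit the strict convexity of $\phi$ (Lemma~\ref{lemma:delta}(iii)) on the convex set $\mc Q^0_c$: this guarantees that the minimizer is unique and is completely characterized by the first-order stationarity conditions, so it suffices to locate a single stationary point. I would organize the argument in three steps: (a) show the minimizer is interior and the budget constraint is active; (b) write the corresponding Lagrange conditions using the gradient formula \eqref{derivative}; (c) solve these conditions explicitly by feeding them back into the eigenvalue equation defining $z$.

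For step (a), the gradient formula \eqref{derivative} shows that $\partial\phi/\partial\nu_i=-(M'z)_i^2/\nu_i^2\le 0$, with strict inequality whenever $(M'z)_i\ne0$. Under Assumption~\ref{assumption:Hirreducible} the Perron eigenvector $z$ of $M[\nu]^{-1}M'$ is strictly positive, and since $M\ge0$ has no zero column (every source being reachable, as otherwise $H$ would fail to be irreducible) one gets $M'z>0$. Hence $\phi$ is strictly decreasing in each coordinate, which forces $\phi(\nu)\to+\infty$ as any $\nu_i\to0^+$ and simultaneously shows that uniformly scaling $\nu$ up decreases $\phi$. Consequently the minimizer lies in the interior of the positive orthant and saturates the budget, i.e.\ $\1'\nu^0=c$.

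For steps (b)--(c), at such an interior point with active budget, stationarity requires $\nabla\phi(\nu^0)$ to be parallel to $\1$; by \eqref{derivative} this reads $(M'z)_i^2/\nu_i^2=\mu$ for all $i$ and some $\mu>0$, and taking positive roots (legitimate since $M'z>0$ and $\nu^0>0$) gives $M'z=\sqrt{\mu}\,\nu^0$, i.e.\ $[\nu^0]^{-1}M'z=\sqrt{\mu}\,\1$. Substituting this into the defining eigenvalue relation $M[\nu^0]^{-1}M'z=\phi(\nu^0)z$ yields $\sqrt{\mu}\,M\1=\phi(\nu^0)z$, so $z$ is proportional to $M\1$. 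Computing $M'z$ from this expression and equating it with $M'z=\sqrt{\mu}\,\nu^0$ gives $\nu^0=\tfrac{1}{\phi(\nu^0)}M'M\1=\tfrac{1}{\phi(\nu^0)}H\1=\tfrac{\1'H\1}{\phi(\nu^0)}\,\pi$, using the definition \eqref{centrality} of $\pi$. Thus $\nu^0$ is a positive multiple of $\pi$; imposing $\1'\nu^0=c$ together with $\1'\pi=1$ fixes the multiple to be $c$, so $\nu^0(c)=c\pi$ and, reading off the proportionality constant, $\phi(c\pi)=\1'H\1/c$.

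I expect the main obstacle to be the boundary analysis in step (a): because $\mc Q^0_c$ is not closed, one must rule out that the infimum is approached as some $\nu_i\to0^+$, which is where the monotonicity and blow-up consequences of \eqref{derivative} and of irreducibility are essential. As a cross-check, one can avoid the discovery route entirely and simply verify the claimed solution: direct substitution shows $M\1$ is an eigenvector of $M[c\pi]^{-1}M'$ with eigenvalue $\1'H\1/c$, and since $M\1\ge0$ this is the Perron value by Lemma~\ref{lemma:delta}(ii) and Perron--Frobenius; checking that $c\pi$ meets the first-order condition of step (b) then identifies it as the unique minimizer by strict convexity.
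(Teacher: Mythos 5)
Your proposal is correct and its core is exactly the paper's own argument: the budget constraint is active by monotonicity, and the Lagrange conditions $(M'z)_i^2/\nu_i^2=\mu$ fed back into the eigenvalue relation $M[\nu]^{-1}M'z=\phi(\nu)z$ give $z\propto M\1$ and $\nu^0=\phi(\nu^0)^{-1}H\1=c\pi$; these are precisely the paper's equations \eqref{KKT1}--\eqref{deriv3}. Where you differ is in adding rigor the paper omits: the boundary analysis ensuring that an interior minimizer exists on the non-closed set $\mc Q^0_c$, and the closing guess-and-check route (verify $c\pi$ satisfies the stationarity conditions and invoke strict convexity), which is a legitimate shortcut the paper does not take.

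Two details in your added material deserve repair. First, the blow-up of $\phi$ as $\nu_i\to 0^+$ is not a consequence of strict monotonicity (a bounded strictly decreasing function is a counterexample); it follows instead in one line from $\phi(\nu)\ge\bigl([\nu]^{-1/2}H[\nu]^{-1/2}\bigr)_{ii}=H_{ii}/\nu_i$, with $H_{ii}>0$ because no column of $M$ vanishes. Second, the dominant eigenvector $z$ of the $n\times n$ matrix $M[\nu]^{-1}M'$ need not be strictly positive: Assumption \ref{assumption:Hirreducible} does not exclude zero rows of $M$, and on such rows $z$ vanishes. What you actually need, and what does hold, is $M'z>0$: writing $z\propto M[\nu]^{-1/2}y$, where $y>0$ is the Perron vector of the irreducible $m\times m$ matrix $[\nu]^{-1/2}H[\nu]^{-1/2}$, gives $M'z\propto H[\nu]^{-1/2}y>0$. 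The same passage to the $m\times m$ matrix also patches your cross-check, since a nonnegative eigenvector of the possibly reducible $n\times n$ matrix is not automatically its Perron vector, whereas the corresponding eigenvector $[c\pi]^{-1/2}H\1\propto[\pi]^{1/2}\1$ of the irreducible matrix is strictly positive, which does identify the eigenvalue $\1'H\1/c$ as the spectral radius.
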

\begin{proof}
Monotonicity properties of $\phi$ (see relations (\ref{derivative})) imply that, necessarily, $\1'\nu^0(c)=c$. 
Using the explicit expression for the derivative of the objective function in (\ref{derivative}) and classical Lagrangian multipliers techniques, we obtain the following equations:
\beq\label{KKT1} \left\{\begin{array}{l}-\nu_i^{-2}(M'z)_i^2+\mu=0\quad i\in\mc S\\ \1'\nu=c\\ M[\nu]^{-1}M'z=\rho z\end{array}\right.\eeq
where $\mu$ is the Lagrangian multiplier, $\rho=\phi(\nu)$ is the value function and $z$ is the positive dominant eigenvector of $M[\nu]^{-1}M'$. 
We proceed as follows. 
The first bulk of equations yield
\beq\label{deriv1} \mu^{1/2}\nu=M'z.\eeq

Substituting in the third equation, we obtain
\beq\label{deriv2} \mu^{1/2}M\1=\rho z.\eeq
As $\rho>0$ and $z>0$, we derive from (\ref{deriv1}) and (\ref{deriv2}) that
\beq\label{deriv3}\nu= \mu^{-1/2}M'z=\rho^{-1}M'M\1.\eeq
The fact that $\1'\nu=c$ yields the thesis.
\end{proof}

We now notice that the unconstrained solution $\nu^0(c)$ computed in Proposition \ref{prop:unconst} satisfies 
$$\nu^0(c)\in  \mc Q_{c}\;\Leftrightarrow\; c\geq c^0=\max\limits_{i=1}^m\frac{d_i}{\pi_i}.$$
For such values of $c$, in consideration of the fact that $\mc Q_{c}\subseteq \mc Q^0_c$, we have that the solution found also solves the constrained minimum problem. 
We gather this in the following result.

\begin{theorem}\label{theo:main-1}
Let $A\in\R_+^{n\times n}$ and $B\in\R_+^{n\times m}$ satisfy Assumptions \ref{assumption:A-Schur} and \ref{assumption:Hirreducible}. 
Then, for every $d\in\R_{++}^m$ and $ c\ge c^0$ we have 
\beq\label{min}\nu^*(c)=c\pi,\quad \phi(\nu^*(c))=\frac{\1'H\1}{c}.\eeq
\end{theorem}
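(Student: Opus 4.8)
The plan is to derive this theorem as an essentially immediate consequence of Proposition \ref{prop:unconst}, exploiting the set inclusion $\mc Q_c \subseteq \mc Q_c^0$ together with the uniqueness of the minimizer guaranteed by strict convexity (Lemma \ref{lemma:delta}(iii)). The guiding principle is that a global minimizer of $\phi$ over the larger admissible set $\mc Q_c^0$ that happens to be feasible for the tighter set $\mc Q_c$ must automatically be optimal over $\mc Q_c$ as well; the whole argument then reduces to checking when the unconstrained optimizer $\nu^0(c) = c\pi$ satisfies the extra lower-bound constraints.

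First I would record that $\pi > 0$. By Assumption \ref{assumption:Hirreducible} the nonnegative matrix $H$ is irreducible, whence $H\1 > 0$, and therefore $\pi = (\1'H\1)^{-1} H\1$ is strictly positive with $\1'\pi = 1$. This makes the threshold $c^0 = \max_i d_i/\pi_i$ finite and well posed, and it ensures $\1'(c\pi) = c$, so the candidate $c\pi$ always meets the budget constraint $\1'\nu \le c$ with equality. Next I would verify the feasibility equivalence: since $\pi > 0$, the vector $c\pi$ lies in $\mc Q_c$ exactly when its lower-bound constraints hold, i.e. $c\pi_i \ge d_i$ for every $i \in \mc S$, which rearranges to $c \ge d_i/\pi_i$ for all $i$, that is $c \ge c^0$. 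Thus under the hypothesis $c \ge c^0$ we have $\nu^0(c) = c\pi \in \mc Q_c$.

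To close, I would invoke the inclusion $\mc Q_c \subseteq \mc Q_c^0$, valid because every $\nu \in \mc Q_c$ obeys $\nu_i \ge d_i > 0$. Proposition \ref{prop:unconst} yields $\phi(\nu) \ge \phi(c\pi)$ for all $\nu \in \mc Q_c^0$, hence in particular for all $\nu \in \mc Q_c$; since $c\pi$ itself belongs to $\mc Q_c$, it attains this lower bound and is therefore a minimizer of $\phi$ over $\mc Q_c$. Strict convexity of $\phi$ on the convex set $\mc Q_c$ forces this minimizer to be unique, so $\nu^*(c) = c\pi$, and the optimal value $\phi(\nu^*(c)) = \1'H\1/c$ is inherited verbatim from Proposition \ref{prop:unconst}. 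There is no genuinely hard step in this regime; the only point demanding care is establishing $\pi > 0$ via irreducibility, since otherwise neither $c^0$ nor the feasibility equivalence would be meaningful.
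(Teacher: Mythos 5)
Your proof is correct and follows essentially the same route as the paper: the paper likewise observes that the unconstrained minimizer $\nu^0(c)=c\pi$ from Proposition \ref{prop:unconst} belongs to $\mc Q_c$ precisely when $c\ge c^0$, and then concludes via the inclusion $\mc Q_{c}\subseteq \mc Q^0_c$ that it also solves the constrained problem. The extra details you supply (positivity of $\pi$ from irreducibility of $H$, and uniqueness of the minimizer from strict convexity) are points the paper treats implicitly or earlier in the text, so they strengthen rather than change the argument.
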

\medskip

In the rest of the paper, we refer to the range $ c\ge c^0$ as to the \textit{high budget} regime. 
We recall one more time that the threshold value $c^0$  depends on the topology of the network, through $\pi$, but also on the lower bound vector $d$.

When $c$ instead satisfies $ \1'd\leq c< c^0$, we have that $\nu^0(c)\not\in  \mc Q_{c}$ and, consequently, the solution of the minimum constrained problem exhibits one or more components saturated to their lower bound level, namely $\nu_i^*(c)=d_i$ for some $i\in\mc S$. 
For later use, we define the set
$$\mc U_c=\{i\in\mc S\;|\; \nu^*_i(c)>d_i\}$$
that we call the set of active protections at budget level $c$. 

As a pivot to study this more general case, we investigate a  minimum problem where we assume that some of the variables $\nu_i$ are constrained to their lower bound level $d_i$, while the others are totally unconstrained.

We first set some notation. 
We assume agents to be split into two disjoint subsets $\mc U, \mc W$: $$\mc S=\mc U\cup\mc W.$$ 
Agents in $\mc U$ are unconstrained while agents in $\mc W$ are the constrained ones. 
We define the new set of variables
$$\mc Q^{\mc U}_{c}=\{\nu\in\R_+^m\;|\; \nu_i=d_i\;\forall i\in\mc W,\; \1'\nu\leq c\},$$
and we split, accordingly,
$$M=\l(\begin{array}{cc} M_{\mc U} & M_{\mc W}\end{array}\r).$$
We want to study 
\beq\label{minmax00}\min\limits_{\substack{\nu\in\mc Q^{\mc U}_{c}}}\phi(\nu)\,.\eeq

As $\mc Q^{\mc U}_{c}$ is still convex, Lemma \ref{lemma:delta} yields that the minimum above is unique. 
We denote it as $\nu^{\mc U}(c)$ and we explicitly derive it using first order conditions as in the proof of Proposition \ref{prop:unconst}. 

\begin{proposition}\label{prop:unconst2} 
Let $A\in\R_+^{n\times n}$ and $B\in\R_+^{n\times m}$ satisfy Assumptions \ref{assumption:A-Schur} and \ref{assumption:Hirreducible}. 
We consider a partition $\mc S=\mc U\cup\mc W$ where $\mc U\neq\emptyset$ and we fix $d\in \R^{\mc W}_{++}$. 
Then, for every $c\ge\1'd$, the solution $\nu^{\mc U}(c)$ is completely described by these relations:
\beq\label{nu^o}\left\{\begin{array}{l}\nu^{\mc U}(c)=(M_{\mc U}'\l(\rho I-M_{\mc W}[d]^{-1}M_{\mc W}'\r)^{-1}M_{\mc U}\1, d)\\
\phi(\nu^{\mc U}(c))=\rho\\
\1'M_{\mc U}'\l(\rho I-M_{\mc W}[d]^{-1}M_{\mc W}'\r)^{-1}M_{\mc U}\1=c-\1'd.
\end{array}\right.
\eeq
\end{proposition}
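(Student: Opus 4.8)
The plan is to reproduce the Lagrangian argument from the proof of Proposition \ref{prop:unconst}, now restricted to the feasible set $\mc Q^{\mc U}_c$ in which only the components $\nu_i$ with $i\in\mc U$ are free while those with $i\in\mc W$ are frozen at $d$. First I would fix interiority and activeness of the budget. By \eqref{derivative} the map $\phi$ is strictly decreasing in each free coordinate, so at the optimum the budget is saturated, $\1'\nu^{\mc U}(c)=c$. Moreover $\phi(\nu)=\rho([\nu]^{-1/2}H[\nu]^{-1/2})$ dominates the diagonal entry $H_{ii}/\nu_i$, which blows up as $\nu_i\to0^+$ whenever source $i$ is reachable (so $H_{ii}>0$); hence $\phi\to+\infty$ on the faces $\{\nu_i=0\}$ and the minimizer lies in the interior of the positive orthant in the $\mc U$-coordinates. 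Thus the only active constraint is the budget equality, and stationarity may be imposed through a single multiplier.

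Introducing a multiplier $\mu$ for the budget and using Lemma \ref{lemma:delta}(iv), the first-order conditions read
\beq
\left\{\begin{array}{l}
-\nu_i^{-2}(M'z)_i^2+\mu=0\quad i\in\mc U\\
\1'\nu=c,\ \nu_i=d_i\ (i\in\mc W)\\
M[\nu]^{-1}M'z=\rho z
\end{array}\right.
\eeq
with $\rho=\phi(\nu)$ and $z>0$ the dominant eigenvector of $M[\nu]^{-1}M'$. The first block gives $\nu_{\mc U}=\mu^{-1/2}M_{\mc U}'z$. Splitting $M[\nu]^{-1}M'=M_{\mc U}[\nu_{\mc U}]^{-1}M_{\mc U}'+M_{\mc W}[d]^{-1}M_{\mc W}'$ and inserting $[\nu_{\mc U}]^{-1}=\mu^{1/2}[M_{\mc U}'z]^{-1}$ collapses the first term to $\mu^{1/2}M_{\mc U}\1$, so the eigenvector equation becomes $\mu^{1/2}M_{\mc U}\1=(\rho I-M_{\mc W}[d]^{-1}M_{\mc W}')z$.

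The crux, which I expect to be the main obstacle, is to invert $\rho I-M_{\mc W}[d]^{-1}M_{\mc W}'$, i.e. to show $\rho>\rho(M_{\mc W}[d]^{-1}M_{\mc W}')$. Both summands in the splitting are positive semidefinite, so $\rho=\lambda_{\max}(M[\nu]^{-1}M')\ge\lambda_{\max}(M_{\mc W}[d]^{-1}M_{\mc W}')$; equality would force a top eigenvector $v$ of the second matrix to satisfy $M_{\mc U}'v=0$, making $v$ a dominant eigenvector of $M[\nu]^{-1}M'$ as well. Since this top eigenvalue is simple (Lemma \ref{lemma:delta}(ii), carried over to $M[\nu]^{-1}M'$ through the shared nonzero spectrum) its eigenvector is the positive Perron vector $z$, and $M_{\mc U}'z=0$ with $z>0$ and $M_{\mc U}\ge0$ would force $M_{\mc U}=0$, contradicting that the sources in $\mc U$ are reachable. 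Hence the matrix is positive definite and invertible, giving $z=\mu^{1/2}(\rho I-M_{\mc W}[d]^{-1}M_{\mc W}')^{-1}M_{\mc U}\1$. Substituting into $\nu_{\mc U}=\mu^{-1/2}M_{\mc U}'z$ cancels $\mu$ and yields the first line of \eqref{nu^o}; the budget identity $\1'\nu_{\mc U}=c-\1'd$ yields the third; and $\phi(\nu^{\mc U}(c))=\rho$ holds by definition. Finally, since $\phi$ is strictly convex (Lemma \ref{lemma:delta}(iii)) and $\mc Q^{\mc U}_c$ is convex, these stationarity conditions are sufficient for global optimality, which identifies the unique minimizer and closes the proof.
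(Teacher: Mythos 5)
Your proposal is correct and follows the same skeleton as the paper's proof: stationarity conditions with a single budget multiplier on the reduced problem, the identity $\mu^{1/2}\nu_{\mc U}=M_{\mc U}'z$, the splitting $M[\nu]^{-1}M'=M_{\mc U}[\nu_{\mc U}]^{-1}M_{\mc U}'+M_{\mc W}[d]^{-1}M_{\mc W}'$, and solving the resulting equation for $z$. Where you genuinely diverge is the step both proofs identify as the crux, namely the spectral gap $\rho>\rho\l(M_{\mc W}[d]^{-1}M_{\mc W}'\r)$ needed to invert $\rho I-M_{\mc W}[d]^{-1}M_{\mc W}'$. The paper argues via nonnegative-matrix theory: entrywise domination $(M_{\mc W}[d]^{-1}M_{\mc W}')_{ij}\le (M[\nu]^{-1}M')_{ij}$ with strict inequality on the diagonal, plus monotonicity of the spectral radius (Horn--Johnson, Thm.\ 8.1.18). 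You instead argue variationally within symmetric/PSD linear algebra: both summands are positive semidefinite, equality of top eigenvalues would force $M_{\mc U}'v=0$ for a top eigenvector $v$, which by simplicity (Lemma \ref{lemma:delta}(ii) transferred through the shared nonzero spectrum) must coincide with the Perron vector $z$, contradicting Assumption \ref{assumption:Hirreducible}. Both routes work; yours avoids the entrywise citation, while the paper's is shorter but, as written, silently needs all diagonal entries of $M_{\mc U}[\nu_{\mc U}]^{-1}M_{\mc U}'$ to be positive. You are also more explicit than the paper about budget saturation, interiority of the minimizer in the $\mc U$-coordinates (the blow-up of $\phi$ using $H_{ii}>0$), and sufficiency of stationarity via strict convexity, points the paper delegates to ``analogous to Proposition \ref{prop:unconst}.'' One shared imprecision: both you and the paper treat $z$ as strictly positive, whereas in general it is only nonnegative (its support being the agents influenced by at least one source); your contradiction survives this, since $M_{ij}>0$ for $j\in\mc U$ forces $z_i>0$ and hence $(M_{\mc U}'z)_j>0$.
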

\medskip

The following is the key technical result of this paper.
\begin{theorem}\label{theo:main-2}
Let $A\in\R_+^{n\times n}$ and $B\in\R_+^{n\times m}$ satisfy Assumptions \ref{assumption:A-Schur} and \ref{assumption:Hirreducible}.
Fix $d\in\R_{++}^m$. 
Then, the function $\nu^*:\mc Q_c\to\R^m_{++}$ is continuous and (entrywise) non decreasing.
\end{theorem}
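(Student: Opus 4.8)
The plan is to regard $\nu^*$ as a function of the budget $c\ge\1'd$ and to prove continuity and entrywise monotonicity of the map $c\mapsto\nu^*(c)$ separately, routing the monotonicity through the active set $\mc U_c=\{i\in\mc S:\nu_i^*(c)>d_i\}$.

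\emph{Continuity.} For each $c$ the set $\mc Q_c$ is nonempty, compact and convex, and by Lemma \ref{lemma:delta}(iii) the objective $\phi$ is continuous and strictly convex, so $\nu^*(c)$ is the unique minimizer. The correspondence $c\mapsto\mc Q_c$ is continuous, being cut out by the single scalar inequality $\1'\nu\le c$ together with the fixed box $\nu\ge d$. Hence Berge's maximum theorem makes the $\argmin$ correspondence upper hemicontinuous, and single-valuedness upgrades this to continuity of $\nu^*$. Equivalently one argues directly: if $c_k\to c$, the vectors $\nu^*(c_k)$ lie in a fixed compact set, every subsequential limit is feasible for $\mc Q_c$ and optimal, and uniqueness forces the whole sequence to converge to $\nu^*(c)$.

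\emph{Monotonicity, reduction to a fixed active set.} Since $\partial\phi/\partial\nu_i=-(M'z)_i^2/\nu_i^2<0$ by Lemma \ref{lemma:delta}(iv), the budget is always binding, $\1'\nu^*(c)=c$, and the KKT stationarity $-(M'z)_i^2/\nu_i^2+\mu-\eta_i=0$ with $\eta_i\ge0$, $\eta_i(\nu_i-d_i)=0$ yields the compact description $\nu_i^*(c)=\max\{d_i,(M'z)_i/\sqrt{\mu}\}$, where $z>0$ is the Perron eigenvector of $M[\nu^*(c)]^{-1}M'$ and $\mu\ge0$ is the budget multiplier; in particular $i\in\mc U_c$ iff $(M'z)_i/d_i>\sqrt{\mu}$. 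Fixing the partition $\mc S=\mc U\cup\mc W$ that matches $\mc U_c$, Proposition \ref{prop:unconst2} gives, with $K:=M_{\mc W}[d]^{-1}M_{\mc W}'$ and $\rho=\phi(\nu^*(c))$, the closed form $\nu_{\mc U}^*(c)=M_{\mc U}'(\rho I-K)^{-1}M_{\mc U}\1$, $\nu_{\mc W}^*(c)=d$, where $\rho$ solves $\1'M_{\mc U}'(\rho I-K)^{-1}M_{\mc U}\1=c-\1'd$.

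\emph{The resolvent-positivity engine and patching.} Because $K\ge0$ entrywise and $\rho>\rho(K)$, the resolvent expands as $(\rho I-K)^{-1}=\sum_{j\ge0}\rho^{-j-1}K^j\ge0$, and likewise $(\rho I-K)^{-2}\ge0$. Writing $w:=M_{\mc U}\1\ge0$ and diagonalizing $K=\sum_k\lambda_k q_kq_k'$, the budget map $f(\rho)=w'(\rho I-K)^{-1}w=\sum_k(q_k'w)^2/(\rho-\lambda_k)$ is strictly decreasing on $(\rho(K),\infty)$, so $\rho=\rho(c)$ is strictly decreasing in $c$. Consider $\hat\nu(c):=M'(\rho I-K)^{-1}w\in\R^m$, whose $\mc U$-entries are the true values $\nu^*_{\mc U}$ and whose $\mc W$-entries are the ``shadow values'' $(M'z)_j/\sqrt\mu$. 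Since $M\ge0$ and $w\ge0$, every entry satisfies $\frac{d}{d\rho}\hat\nu_i=-(\text{$i$-th column of }M)'(\rho I-K)^{-2}w\le0$, so each $\hat\nu_i$ is non-decreasing in $c$. On $\mc U$ this gives the asserted monotonicity and keeps active components strictly above their bounds; on $\mc W$ the shadow value $\hat\nu_j\le d_j$ can only rise to $d_j$ and activate $j$, never fall. Hence the active set is non-decreasing, no component reverts to its bound, and there are at most $m$ regime changes; the finitely many monotone pieces agree at the thresholds by the continuity established above and therefore concatenate into a globally non-decreasing map.

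\emph{Main obstacle.} The delicate part is the patching/active-set control, and within it two technical points: (a) justifying the strict separation $\rho(c)>\rho(K)$ that legitimizes the nonnegative resolvent expansion, which relies on Assumption \ref{assumption:Hirreducible} and Perron--Frobenius through the strictly positive contribution $z'M_{\mc U}[\nu_{\mc U}]^{-1}M_{\mc U}'z$ of the active block; and (b) excluding deactivation at a threshold, for which the monotone shadow-value description $\hat\nu$ is exactly the needed device. The rest is routine once these are in place.
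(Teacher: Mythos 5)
Your proposal is correct in substance, but it reaches the monotonicity claim by a genuinely different route than the paper, so a comparison is worthwhile. The shared core is what you call the resolvent-positivity engine: the paper's Lemma \ref{lemma:min-high} proves exactly the same fact for the active block by expanding $(\rho I-K)^{-1}$ as the Neumann series $\sum_{j\ge 0}\rho^{-j-1}K^j$ and using that $\rho=\phi(\nu^{\mc U}(c))$ is strictly decreasing in $c$; your derivative computation $\frac{d}{d\rho}\hat\nu_i\le 0$ is an equivalent formulation of the same mechanism. The divergence is in how the active set is controlled. The paper never looks at dual variables: it shows that $\mc W_c=\mc S\setminus\mc U_c$ is locally constant on a left and a right neighborhood of every $\bar c$ by a topological argument (taking $\liminf$ of $|\mc W_c|$ from either side and combining continuity of $\nu^*$ with Lemma \ref{lemma:min-high} to rule out changes), and then upgrades local monotonicity to global monotonicity by continuity. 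You instead extend the positivity argument to the $\mc W$-rows of $M'$, interpreting $M_{\mc W}'(\rho I-K)^{-1}M_{\mc U}\1$ as the vector of KKT shadow values, so that for each fixed partition the set of budgets on which the closed form of Proposition \ref{prop:unconst2} is both feasible ($\nu^{\mc U}_i(c)\ge d_i$ on $\mc U$) and optimal (shadow values at most $d_j$ on $\mc W$) is cut out by finitely many monotone scalar conditions and is therefore an interval; per-regime monotonicity plus continuity then patch globally. Your route buys more than the paper's: it directly yields that the active set is non-decreasing in $c$ with at most $m$ regime changes, i.e., essentially the waterfilling structure that the paper obtains only afterwards, in Corollary \ref{cor:final}, as a consequence of Theorem \ref{theo:main-2}. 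What the paper's route buys is economy: it never needs multiplier bookkeeping or the identification of shadow values. To make your argument fully rigorous you should state two things explicitly: (i) the identification $\nu^*(c)=\nu^{\mc U_c}(c)$ rests on necessity and sufficiency of the KKT conditions for this convex program (linear constraints, strictly convex $\phi$ by Lemma \ref{lemma:delta}), which is implicit in your "fixing the partition" step; and (ii) the strict separation $\rho(c)>\rho(K)$ that legitimizes the nonnegative resolvent expansion is precisely what is established inside the paper's proof of Proposition \ref{prop:unconst2} (via the entrywise domination $M_{\mc W}[d]^{-1}M_{\mc W}'\le M[\nu]^{-1}M'$ with strict inequality on the diagonal), so you may cite it rather than re-derive it as you sketch in your "main obstacle" remark. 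Your continuity argument coincides with the paper's.
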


Using the above result, we can give a complete description of the optimal solution $\nu^*(c)$. 
This is the content of the following result that follows directly from Theorem \ref{theo:main-2}.
\begin{corollary}\label{cor:final}
Let $A\in\R_+^{n\times n}$ and $B\in\R_+^{n\times m}$ satisfy Assumptions \ref{assumption:A-Schur} and \ref{assumption:Hirreducible}.
Fix $d\in\R_{++}^m$. 
Then,  there exists a finite sequence of points $$c^0=\max\limits_{i=1}^m\frac{d_i}{\pi_i}>c^1>\cdots >c^s=\1'd$$ and subsets $\mc S\supsetneq\mc U^{0}\supsetneq \mc U^1\supsetneq\cdots\supsetneq\mc U^{s-1}$ such that
$$\mc U_c=\left\{\begin{array}{ll}\mc S\quad &\hbox{if}\; c>c^0\\ \mc U^k\; &\hbox{if}\; c^{k+1}<c\leq c^{k},\; k=0,\dots ,s-1\end{array}\right.$$
$$\nu^*(c)=\left\{\begin{array}{ll}\nu^0(c)\quad &\hbox{if}\; c>c^0\\ \nu^{\mc U^k}(c)\; &\hbox{if}\; c^{k+1}<c\leq c^{k},\; k=0,\dots ,s-1.\end{array}\right.$$
\end{corollary}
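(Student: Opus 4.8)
The plan is to derive everything from the continuity and entrywise monotonicity of $c\mapsto\nu^*(c)$ granted by Theorem \ref{theo:main-2}, treating the active set $\mc U_c$ as the primary object. First I would record the two boundary regimes: at $c=\1'd$ the constraints $\nu\ge d$ and $\1'\nu\le c$ force $\nu^*(\1'd)=d$, so $\mc U_{\1'd}=\emptyset$; and for $c>c^0$ Theorem \ref{theo:main-1} gives $\nu^*(c)=c\pi$ with $c\pi_i>d_i$ for every $i$, so $\mc U_c=\mc S$. The key monotonicity observation is that, since each $\nu_i^*(c)$ is non-decreasing in $c$ and $\nu_i^*(c)\ge d_i$, the condition $\nu_i^*(c)>d_i$ is inherited by all larger budgets; hence $c'\le c$ implies $\mc U_{c'}\subseteq\mc U_c$. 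As $\mc U_c$ takes values in the finite lattice of subsets of $\mc S$ and is non-decreasing in $c$, it is a step function with finitely many jumps, and this defines the decreasing breakpoints $c^0>c^1>\cdots>c^s=\1'd$ together with the strictly nested active sets $\mc S\supsetneq\mc U^0\supsetneq\cdots\supsetneq\mc U^{s-1}$.

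Next I would pin down the half-open form of the intervals using continuity. Any component $j$ whose activity changes at a breakpoint $c^k$ is active for budgets just above $c^k$, so by continuity $\nu_j^*(c^k)=d_j$; monotonicity then forces $\nu_j^*(c)=d_j$ for all $c\le c^k$ in the relevant range, so $j\notin\mc U_{c^k}$. Thus the value of $\mc U_c$ at each breakpoint is the smaller set, yielding $\mc U_c=\mc U^k$ precisely on $(c^{k+1},c^k]$ and $\mc U_c=\mc S$ on $(c^0,+\infty)$, exactly as claimed. The top end is consistent with Theorem \ref{theo:main-1}: at $c=c^0=\max_i d_i/\pi_i$ the maximizing index saturates, $c^0\pi_j=d_j$, so $\mc U_{c^0}=\mc U^0\subsetneq\mc S$.

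It remains to identify $\nu^*(c)$ with the explicit vector $\nu^{\mc U^k}(c)$ of Proposition \ref{prop:unconst2} on each interval. Fix $c\in(c^{k+1},c^k]$ and write $\mc U=\mc U^k$, $\mc W=\mc S\setminus\mc U$. By construction $\nu_i^*(c)=d_i$ for $i\in\mc W$ and $\nu_i^*(c)>d_i$ for $i\in\mc U$, so $\nu^*(c)$ lies in the face $F=\{\nu\in\mc Q_c:\nu_i=d_i\ \forall i\in\mc W\}$ and, being the global minimizer of $\phi$ over $\mc Q_c\supseteq F$, it minimizes $\phi$ over $F$ as well. Since the only constraints distinguishing $F$ from $\mc Q^{\mc U}_c$ are $\nu_i\ge d_i$ for $i\in\mc U$, and these hold strictly at $\nu^*(c)$, the two sets coincide in a neighborhood of $\nu^*(c)$; hence $\nu^*(c)$ is a local minimizer of $\phi$ over the convex set $\mc Q^{\mc U}_c$. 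By the strict convexity of $\phi$ from Lemma \ref{lemma:delta}(iii), a local minimizer is the unique global one, so $\nu^*(c)=\nu^{\mc U^k}(c)$, completing the description.

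I expect the main obstacle to be this last identification step: one must argue carefully that relaxing the saturated lower bounds on $\mc U$, which enlarges the feasible set from the face $F$ to $\mc Q^{\mc U}_c$, does not move the optimum, i.e. that inactivity of those constraints at $\nu^*(c)$ transfers optimality from $F$ to $\mc Q^{\mc U}_c$. Everything else is an assembly of the monotonicity and continuity supplied by Theorem \ref{theo:main-2} together with the boundary computations of Theorem \ref{theo:main-1}.
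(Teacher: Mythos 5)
Your proposal is correct and takes essentially the same route as the paper, which in fact gives no separate proof but asserts the corollary ``follows directly from Theorem \ref{theo:main-2}'': your argument is precisely that assertion made explicit (monotonicity of $c\mapsto\mc U_c$ from entrywise monotonicity of $\nu^*$, finitely many breakpoints since $\mc U_c$ lives in a finite lattice, continuity fixing the half-open intervals, and the strict-convexity/local-minimizer argument identifying $\nu^*(c)$ with $\nu^{\mc U^k}(c)$ of Proposition \ref{prop:unconst2}). One sentence needs a small repair: activity of $j$ just above $c^k$ plus continuity only yields $\nu_j^*(c^k)\ge d_j$; what actually pins down $\nu_j^*(c^k)=d_j$ is that $j$ is \emph{inactive} just below $c^k$ together with left-continuity (equivalently, if $\nu_j^*(c^k)>d_j$ then continuity would keep $j$ active slightly below $c^k$, contradicting $j\notin\mc U^k$), after which the rest of your argument goes through unchanged.
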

\medskip

This result says that the optimal protection vector $\nu^*$ exhibits a 'waterfiling' structure as a function of the budget $c$.
As we had already noticed, above the threshold $c^0$, $\nu^*$ coincides with the minimum of the unconstrained case. 
At $c=c^0$ one or more components saturate at their lower bound $d_i$ and for an interval of values $]c^1, c^0]$ all remaining components remain strictly above the corresponding lower bound and the optimum $\nu^*$ coincides with the solution of problem (\ref{minmax00}) with $\mc U=\mc U_{c^0}$. 
At $c^1$ more components will saturate and will remain stable in an interval $]c^2, c^1]$ and so on.  
We notice that the specific form of the solutions of problem (\ref{minmax00}) can be used to make Corollary \ref{cor:final} an effective recursive characterization of the optimal solution. 
Notice, in particular, that the sequence of thresholds $c^k$ and the subsets $\mc U^k$ can be recursively computed through the following characterization
\beq\label{cU_k}\begin{array}{rcl}c^{k+1}&=&\inf\{c\leq c^k\;|\; \nu^{\mc U^k}_i(c)>d_i\;\forall i\in \mc U^k\}\\
\mc U^{k+1}&=&\{i\in\mc S\;|\; \nu^{\mc U^k}_i(c^{k+1})>d_i\}.\end{array}\eeq

\begin{figure}[h]
\begin{algorithmic}[1]
\Procedure{Iterative solution}{$M, d$}
\State Set $k=0$,
\State $c^k=\max_{1\le i\le m}d_i/\pi_i,$ and $\mc U^k=\{i:c^k\pi_i>d_i\}$
\While{$|\mc U^k|>0$}
	\State Calculate $\phi(\nu^{\mc U^k})$ and $\nu^{\mc U^k}(c)$ through (\ref{nu^o})
	\State $k=k+1$
	\State Update $c^k$ and $\mc U^k$ using (\ref{cU_k})
	\State $\nu^*(c)=\nu^{\mc U^{k-1}}(c)$ for $c\in[c^{k},c^{k-1}]$
\EndWhile
\EndProcedure
\end{algorithmic}
\caption{Iterative procedure to calculate $\nu^*(c)$.}
\label{fig:algorithm}
\end{figure}

\section{Examples}\label{sec:examples}
In this section, we discuss three examples that illustrate our results and prove the effectiveness of our algorithmic solution.
We restrict to the case of a Friedkin-Johnsen model as described in Subsection \ref{sec:FriedkinJohnsen}. 
In all the examples, we assume that the regular nodes are connected through an undirected graph having $W$ as adjacency matrix and we define $P$ as $P_{ij}=w_i^{-1}W_{ij}$ where $w_i=\sum_{j\in\mc R}W_{ij}$ is the out-degree of node $i$. 
Moreover, we assume that $\lambda_{i}=1/2$ for every $i$. 
Notice that, in this case, we have that $M=(2I-P)^{-1}$. 
Finally, we set $d=\1$. 
Notice that the threshold for the high budget regime becomes $$ c^0=\frac1{\min_i\pi_i}$$ so that exclusively depends on the network topology trough the vector of centralities $\pi$.

\begin{example}\label{ex:regular}
Suppose that the graph connecting the regular nodes is undirected and $w$-regular (i.e. all nodes have the same degree $w_i=w$).
As $P$ is in this case symmetric, we also have that $M=(2I-P)^{-1}$ is symmetric, so that $H=M^2$. 
A direct check shows that $M\1=\1$ so that also $H\1=\1$. 
This implies that the centrality $\pi$ coincides with the uniform distribution, namely $$\pi=n^{-1}\1,$$ that is, all the agents have the same centrality.
Given that $d=\1$ it results that $c^0=n=\1'd$. 
This means that, in this case, there is only the high budget regime and the optimal intervention $\nu^*(c)$ is given by $\nu_i^*(c)=c/n$ for every $c\geq n$  and for every $i$.
\end{example}
\medskip

\begin{example}\label{ex:small}
Consider the undirected network of $n=11$ agents in Figure \ref{fig:network}. 
The three columns of Table \ref{tb:importance} represent, respectively, the index of nodes, their centralities $\pi_i$, and their degrees $w_i$. 

\begin{figure}[!ht]
\begin{subfigure}{0.2\textwidth}
\centering
\includegraphics[width=3.5cm]{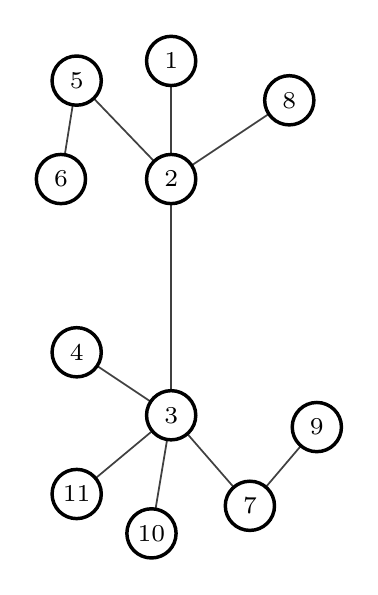}
\caption{Undirected network of $n=11$ nodes.}
\label{fig:network}
\end{subfigure}
\hfill
\begin{subfigure}{0.2\textwidth}
\centering
\begin{tabular}{|l|l|l|} 
\hline
$i$ & $\pi$ & $w$ \\ 
\hline\hline
 1  & 0.0647 & 1  \\
 2  & 0.1536 & 4  \\
 3  & 0.1854 & 5 \\
 4  & 0.064 & 1 \\
 5  & 0.0999 & 2 \\
 6  & 0.0704 & 1 \\
 7  & 0.0991 & 2 \\
 8  & 0.0647 & 1 \\
 9  & 0.0702 & 1 \\
10  & 0.064 & 1 \\
11  & 0.064 & 1 \\
\hline
\end{tabular}
\caption{Centrality $\pi$ and degree $w$ of the network.}
\label{tb:importance}
\end{subfigure}
\caption{Undirected network of $n=11$ agents and table of centrality $\pi$ and degree $w$ of each agent.}
\end{figure}

Figure \ref{fig:nu} shows the behaviour of the various components of $\nu^*(c)$ for $c$ in $[n,2n]$ with, in evidence, the threshold points where one or more components saturate at their lower level $1$. 
In Figure \ref{fig:phi-perf} we present a performance comparison between the optimal intervention and two heuristics based on the degree. 
In particular, we denote by $\nu^w(c)$ the intervention that distributes the budget $c$ among agents proportionally to the degree, i.e. $\nu_i^w(c)=1+(c-n)w_i/\sum_jw_j$. 
We denote by $\nu^{key}(c)$ the intervention that concentrates all the available budget on the node with the highest degree, that, in this case, is node $3$. 

\begin{figure}[!ht]
\begin{subfigure}{0.5\textwidth}
\centering
\includegraphics[width=8.5cm]{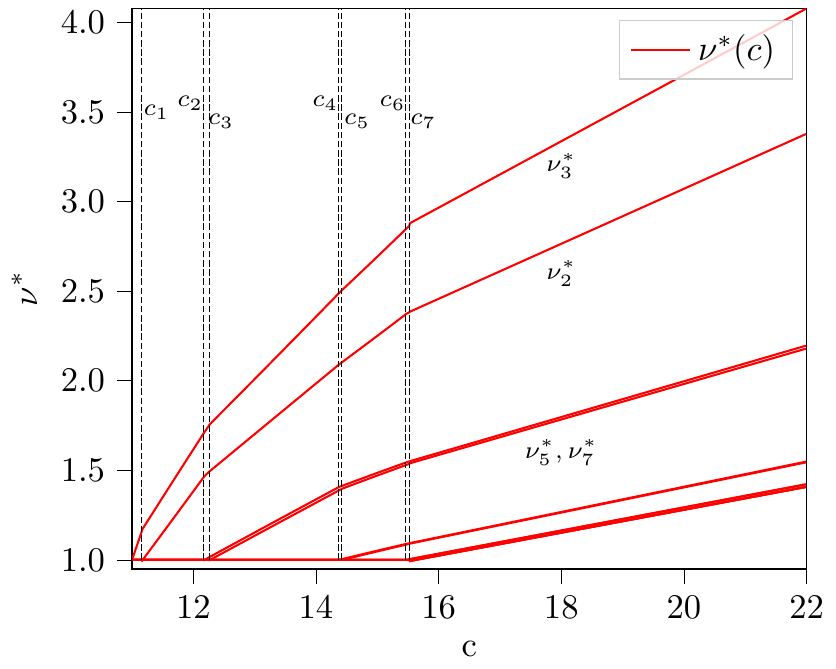}
\caption{Optimal intervention $\nu^*(c)$ for network depicted in Figure \ref{fig:network}.}
\label{fig:nu}
\end{subfigure}
\vfill
\begin{subfigure}{0.5\textwidth}
\centering
\includegraphics[width=8.5cm]{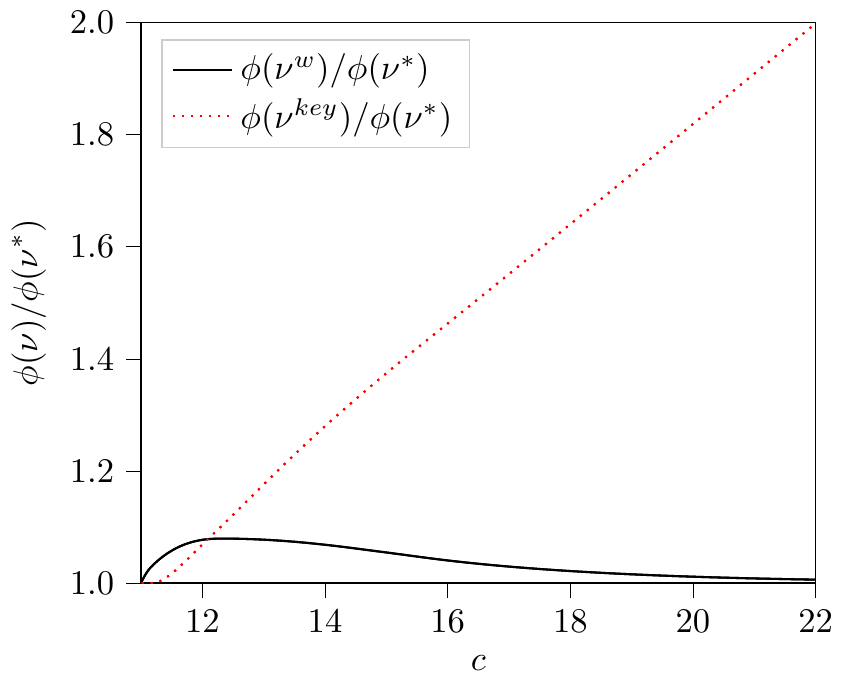}
\caption{Performance comparison between $\nu^*(c)$ and the two heuristics $\nu^w(c)$ (solid black) and $\nu^{key}(c)$ (dotted red).}
\label{fig:phi-perf}
\end{subfigure}
\caption{Behavior of optimal intervention $\nu^*(c)$ for $c\in[n,2n]$ and comparison between $\phi(\nu^w(c))/\phi(\nu^*(c))$ and $\phi(\nu^{key}(c))/\phi(\nu^*(c))$.}
\end{figure}

\end{example}
\medskip

\begin{example}\label{ex:topologies}
In this last example, we compare the behavior of the optimal value $\phi(\nu^*(c))$ and the range of the high budget regime with respect to different topologies. 
Specifically, we consider three networks with $n=20$ agents: a regular network, an Erd\H{o}s-Rényi (ER) undirected network, and a power-law (BA) preferential attachment network.
In the ER network we take a probability $p=1/4$ of connection. 
The BA network is constructed adding one new node at a time.

\begin{figure}[!ht]
\begin{center}
\includegraphics[width=8.5cm]{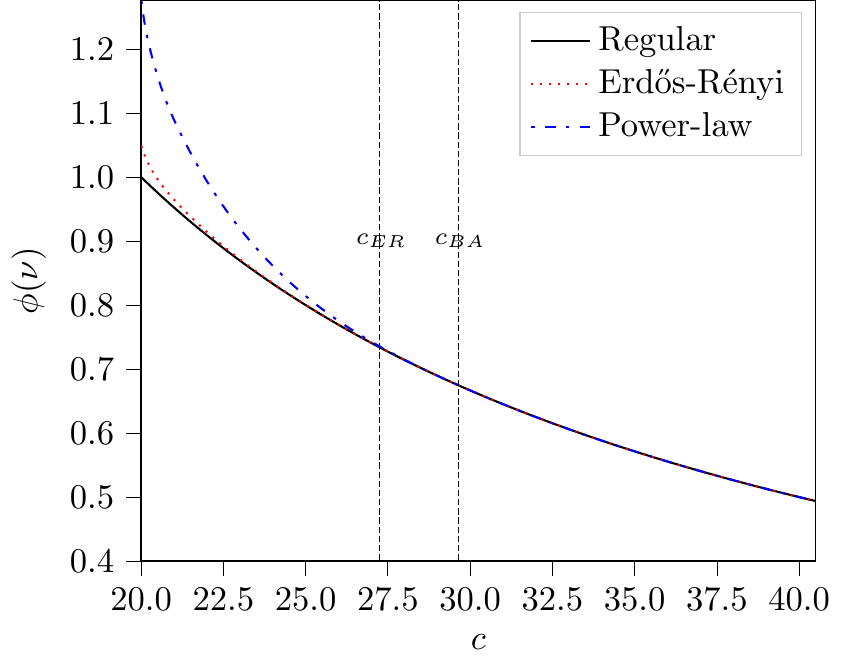}
\end{center}
\caption{Optimal value $\phi(\nu^*(c))$ for three different networks of $n=20$ agents: regular (solid black), Erd\H{o}s-Rényi with probability $p=1/4$ (dotted red), and power-law constructed with preferential attachment of one node at a time (dash-dotted blue). D
ashed vertical lines represent high budget regime $c^0$ for Erd\H{o}s-Rényi ($c_{ER}$) and power-law ($c_{BA}$) network.}
\label{fig:c high}
\end{figure}

In Figure \ref{fig:c high} are depicted values of $\phi(\nu^*(c))$ for the three networks considered. 
We notice two facts. 
First, the high budget threshold of the BA network is greater then the high budget threshold of ER network. 
This is due to the higher heterogeneity of the network centrality $\pi$ of the BA model due to the presence of hubs. 
Second, in the low budget regime, it is more costly to protect the BA model than the ER or the regular model.
\end{example}

\section{Conclusion}
We have presented and discussed an adversarial min-max problem for opinion displacement in a linear opinion dynamics model with exogenous inputs. 
This model encompasses the classical French-DeGroot model with stubborn agents and leads to a new concept of centrality rooted on the set of inputs. 
Our main result is an effective exact recursive solution of the min-max problem. 
A key property is the existence of a threshold value for the budget of the defender, above which the optimal intervention for the defender is an action on all sources proportional to their centrality, while below it, it concentrates on a subset of them.

Several directions are open for future research. 
Significant extensions are in the direction of removing Assumption 2 or considering more general performance functions (e.g.,  polarization). 
Another direction is to consider scenarios where the attacker employs random perturbation strategies.

\section{ACKNOWLEDGMENTS}
This work was partly supported by the Italian Ministry for University and Research through grants ``Dipartimenti
di Eccellenza 2018–2022'' [CUP: E11G18000350001] and Project PRIN 2017 ``Advanced Network Control of Future Smart Grids''
(http://vectors.dieti.unina.it), and by the {\it Fondazione Compagnia di San Paolo} through Project ``SMaILE''.

\bibliographystyle{IEEEtran}
\bibliography{cdc2022.bib}

\begin{thebibliography}{10}
\providecommand{\url}[1]{#1}
\csname url@samestyle\endcsname
\providecommand{\newblock}{\relax}
\providecommand{\bibinfo}[2]{#2}
\providecommand{\BIBentrySTDinterwordspacing}{\spaceskip=0pt\relax}
\providecommand{\BIBentryALTinterwordstretchfactor}{4}
\providecommand{\BIBentryALTinterwordspacing}{\spaceskip=\fontdimen2\font plus
\BIBentryALTinterwordstretchfactor\fontdimen3\font minus
  \fontdimen4\font\relax}
\providecommand{\BIBforeignlanguage}[2]{{%
\expandafter\ifx\csname l@#1\endcsname\relax
\typeout{** WARNING: IEEEtran.bst: No hyphenation pattern has been}%
\typeout{** loaded for the language `#1'. Using the pattern for}%
\typeout{** the default language instead.}%
\else
\language=\csname l@#1\endcsname
\fi
#2}}
\providecommand{\BIBdecl}{\relax}
\BIBdecl

\bibitem{degroot1974reaching}
M.~H. DeGroot, ``Reaching a consensus,'' \emph{Journal of the American
  Statistical Association}, vol.~69, no. 345, pp. 118--121, 1974.

\bibitem{lorenz2007continuous}
J.~Lorenz, ``Continuous opinion dynamics under bounded confidence: A survey,''
  \emph{International Journal of Modern Physics C}, vol.~18, no.~12, pp.
  1819--1838, 2007.

\bibitem{Como.Fagnani:2016}
G.~Como and F.~Fagnani, ``From local averaging to emergent global behaviors:
  The fundamental role of network interconnections,'' \emph{Systems and Control
  Letters}, vol.~95, pp. 70--76, 2016.

\bibitem{friedkin1990social}
N.~E. Friedkin and E.~C. Johnsen, ``Social influence and opinions,''
  \emph{Journal of Mathematical Sociology}, vol.~15, no. 3-4, pp. 193--206,
  1990.

\bibitem{yildiz2013binary}
E.~Yildiz, A.~Ozdaglar, D.~Acemoglu, A.~Saberi, and A.~Scaglione, ``Binary
  opinion dynamics with stubborn agents,'' \emph{ACM Transactions on Economics
  and Computation (TEAC)}, vol.~1, no.~4, pp. 1--30, 2013.

\bibitem{vassio2014message}
L.~Vassio, F.~Fagnani, P.~Frasca, and A.~Ozdaglar, ``Message passing
  optimization of harmonic influence centrality,'' \emph{IEEE transactions on
  control of network systems}, vol.~1, no.~1, pp. 109--120, 2014.

\bibitem{dhamal2018optimal}
S.~Dhamal, W.~Ben-Ameur, T.~Chahed, and E.~Altman, ``Optimal investment
  strategies for competing camps in a social network: A broad framework,''
  \emph{IEEE Transactions on Network Science and Engineering}, vol.~6, no.~4,
  pp. 628--645, 2018.

\bibitem{gaitonde2020adversarial}
J.~Gaitonde, J.~Kleinberg, and E.~Tardos, ``Adversarial perturbations of
  opinion dynamics in networks,'' in \emph{Proceedings of the 21st ACM
  Conference on Economics and Computation}, 2020, pp. 471--472.

\bibitem{ravazzi2021ergodic}
C.~Ravazzi, S.~Hojjatinia, C.~M. Lagoa, and F.~Dabbene, ``Ergodic opinion
  dynamics over networks: Learning influences from partial observations,''
  \emph{IEEE Transactions on Automatic Control}, vol.~66, no.~6, pp.
  2709--2723, 2021.

\bibitem{ballester2006who}
C.~Ballester, A.~Calv{\'o}-Armengol, and Y.~Zenou, ``Who's who in networks.
  wanted: The key player,'' \emph{Econometrica}, vol.~74, no.~5, pp.
  1403--1417, 2006.

\bibitem{galeotti2020targeting}
A.~Galeotti, B.~Golub, and S.~Goyal, ``Targeting interventions in networks,''
  \emph{Econometrica}, vol.~88, no.~6, pp. 2445--2471, 2020.

\bibitem{demange2017optimal}
G.~Demange, ``Optimal targeting strategies in a network under
  complementarities,'' \emph{Games and Economic Behavior}, vol. 105, pp.
  84--103, 2017.

\bibitem{chen2018quantifying}
X.~Chen, J.~Lijffijt, and T.~De~Bie, ``Quantifying and minimizing risk of
  conflict in social networks,'' in \emph{Proceedings of the 24th ACM SIGKDD
  International Conference on Knowledge Discovery \& Data Mining}, 2018, pp.
  1197--1205.

\bibitem{damonte2020systemic}
L.~Damonte, G.~Como, and F.~Fagnani, ``Systemic risk and network
  intervention,'' \emph{IFAC-PapersOnLine}, vol.~53, no.~2, pp. 2856--2861,
  2020.

\bibitem{Hunter:2022}
H.~D. Scott and T.~Zaman, ``Optimizing opinions with stubborn agents,''
  \emph{Operations Research}, 2022.

\bibitem{Como2021targeting}
G.~Como, S.~Durand, and F.~Fagnani, ``Optimal targeting in super-modular
  games,'' \emph{IEEE Transactions on Automatic Control}, 2022.

\bibitem{acemouglu2013opinion}
D.~Acemo{\u{g}}lu, G.~Como, F.~Fagnani, and A.~Ozdaglar, ``Opinion fluctuations
  and disagreement in social networks,'' \emph{Mathematics of Operations
  Research}, vol.~38, no.~1, pp. 1--27, 2013.

\bibitem{horn2012matrix}
R.~A. Horn and C.~R. Johnson, \emph{Matrix analysis}.\hskip 1em plus 0.5em
  minus 0.4em\relax Cambridge university press, 2012.

\bibitem{berman1994nonnegative}
A.~Berman and R.~J. Plemmons, \emph{Nonnegative matrices in the mathematical
  sciences}.\hskip 1em plus 0.5em minus 0.4em\relax SIAM, 1994.

\bibitem{magnus2019matrix}
J.~R. Magnus and H.~Neudecker, \emph{Matrix differential calculus with
  applications in statistics and econometrics}.\hskip 1em plus 0.5em minus
  0.4em\relax John Wiley \& Sons, 2019.

\end{thebibliography}

\subsection{Proofs}
\begin{proof}[Proof of Lemma  \ref{lemma:delta}] Consider $\nu$ in $\mc Q_{c}$.

(i) The first identity follows from the  variational characterization (Courant–Fischer–Weyl min-max principle \cite[Theorem 4.2.6]{horn2012matrix}) of the eigenvalues of the symmetric matrix $[\nu]^{-1/2}H[\nu]^{-1/2}$. 
The second one follows from the property that non-zero eigenvalues of the product of two matrices are invariant with respect the order of the factors. 

(ii) It follows from Assumption \ref{assumption:Hirreducible} that $[\nu]^{-1/2}H[\nu]^{-1/2}$ is irreducible. 
Since $[\nu]^{-1/2}H[\nu]^{-1/2}$ is a nonnegative matrix, its dominant eigenvalue is simple \cite[Theorem 1.4]{berman1994nonnegative}. 

(iii) To prove the third item consider the function $f: \R^n_+\to \R$ given by \beq\label{f} f(s):=\rho\l(M[s]M'\r).\eeq 
Since we can express $$f(s)=\max\limits_{y\in\R^n:\|y\|_2=1}\|M[s]M'y\|_2$$ it follows that $f(s)$, being the max of convex functions, is convex. 
The same expression yields, since $M$ has all non-negative elements and the maximum is reached by a non-negative vector $k$, that $f(s)$ is non decreasing with respect to the component-wise order in $\R^n_+$.
 
We now prove convexity of $\phi(\nu)$. 
Notice that we can write $\phi(\nu)=f(\nu^{-1})$ where we indicate with $\nu^{-1}$ the vector of component-wise inversion of the vector $\nu$. 
Fix now $\nu_1,\nu_2 \in\R^n_+$ and $\lambda \in [0,1]$ and notice that, since the inversion function $\nu\mapsto \nu^{-1}$ is convex on $\R_+$, 
\beq\label{-1conv} (\lambda\nu_1+(1-\lambda)\nu_2)^{-1}\leq \lambda\nu_1^{-1}+(1-\lambda)\nu_2^{-1}\eeq component-wise. 
We now compute as follows
$$\begin{array}{rcl} \phi(\lambda\nu_1+(1-\lambda)\nu_2) &=& f((\lambda\nu_1+(1-\lambda)\nu_2)^{-1})\\ &\leq& f(\lambda\nu_1^{-1}+(1-\lambda)\nu_2^{-1})\\ &\leq& \lambda f(\nu_1^{-1})+(1-\lambda)f(\nu_2^{-1})\\ &=& \lambda\phi(\nu_1)+(1-\lambda)\phi(\nu_2),\end{array}$$
where in the first inequality we have used (\ref{-1conv}) and the monotonicity of $f$ and in the second inequality the convexity of $f$. 
Strict convexity follows from the fact that for every $i=1,\dots,n$, there exists $h$ such that $M_{ih}>0$; then the function $f$ is strictly increasing and thus the first inequality is strict.

(iv) follows by applying Implicit function theorem and explicit computation could be found in (\cite{magnus2019matrix}).
\end{proof}

\begin{proof}[Proof of Proposition \ref{prop:unconst2}]
Proof is analogous to the one of Proposition \ref{prop:unconst}. 
We notice that also in this case, the optimum will lay on the boundary $\1'\nu=c$. 
We write $(\nu=\tilde\nu, d)$ and we similarly derive first order conditions.
\beq\label{KKT2} \left\{\begin{array}{l}-\tilde\nu_i^{-2}(M'z)_i^2+\mu=0\quad i\in\mc U\\ \1'\tilde\nu=c-\1'd\\ M_{\mc U}[\tilde\nu]^{-1}M_{\mc U}'z+M_{\mc W}[d]^{-1}M_{\mc W}'z=\rho. z\end{array}\right.\eeq
The first set of equations yield
\beq\label{deriv11} \mu^{1/2}\tilde\nu=M_{\mc U}'z.\eeq
Substituting in the third equation, we obtain
\beq\label{deriv21} \mu^{1/2}M_{\mc U}\1+M_{\mc W}[d]^{-1}M_{\mc W}'z=\rho z.\eeq
Notice now that, since $\mc W\neq\emptyset$, $(M_{\mc W}[d]^{-1}M_{\mc W}')_{ij}$ $\leq(M\l[\nu\r]^{-1}M')_{ij}$ for all $i$ and $j$ with strict inequality on the diagonal terms. 
This implies that $\rho(M_{\mc W}[d]^{-1}M_{\mc W}')<\rho(M\l[\nu\r]^{-1}M')=\rho$ (for instance representing the spectral radius as norm matrix, details can be found in \cite[Theorem 8.1.18]{horn2012matrix}. 
Consequently,
\beq\label{deriv22} z=\mu^{1/2}(\rho I- M_{\mc W}[d]^{-1}M_{\mc W}')^{-1}M_{\mc U}\1.\eeq
Relation (\ref{deriv22}) together with (\ref{deriv11}) yield the thesis.
\end{proof}

Before proving Theorem \ref{theo:main-2}, we state and proof a simple lemma.

\begin{lemma}\label{lemma:min-high} 
Consider the solution $\nu^{\mc U}(c)$ of problem (\ref{minmax00}). 
For every $i\in\mc U$, $\nu_i^{\mc U}(c)$  is strictly increasing in $c$.
\end{lemma}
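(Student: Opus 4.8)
The plan is to exploit the closed-form description of $\nu^{\mc U}(c)$ in Proposition \ref{prop:unconst2} and to reparametrize everything by the optimal value $\rho=\phi(\nu^{\mc U}(c))$ instead of by the budget $c$. Set $K=M_{\mc W}[d]^{-1}M_{\mc W}'$, a fixed nonnegative symmetric matrix that does not depend on $c$, and recall from the proof of Proposition \ref{prop:unconst2} that $\rho>\rho(K)$. Hence $R(\rho):=(\rho I-K)^{-1}$ is well defined and symmetric, and admits the convergent Neumann expansion $R(\rho)=\sum_{k\ge0}\rho^{-(k+1)}K^k$; this shows in particular that $R(\rho)$ is entrywise nonnegative and that $\frac{d}{d\rho}R(\rho)=-R(\rho)^2$. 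Writing $e_i$ for the $i$-th canonical basis vector, the last two relations in (\ref{nu^o}) then read, for $i\in\mc U$,
$$\nu_i^{\mc U}=h_i(\rho):=e_i'M_{\mc U}'R(\rho)M_{\mc U}\1,\qquad c-\1'd=g(\rho):=\1'M_{\mc U}'R(\rho)M_{\mc U}\1.$$

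First I would show that $\rho$ is a strictly decreasing function of $c$. Differentiating $g$ and using symmetry of $R(\rho)$ gives $g'(\rho)=-\|R(\rho)M_{\mc U}\1\|^2$, which is strictly negative because $M_{\mc U}\1\neq0$ and $R(\rho)$ is invertible. Since $g(\rho)=c-\1'd$ is strictly increasing in $c$, the implicit relation defines $\rho=\rho(c)$ strictly decreasing and differentiable (implicit function theorem). Differentiating $h_i$ gives $h_i'(\rho)=-(R(\rho)M_{\mc U}e_i)'(R(\rho)M_{\mc U}\1)$, and the chain rule, together with $\rho'(c)=1/g'(\rho)$, yields
$$\frac{d}{dc}\nu_i^{\mc U}(c)=\frac{h_i'(\rho)}{g'(\rho)}=\frac{(R(\rho)M_{\mc U}e_i)'(R(\rho)M_{\mc U}\1)}{\|R(\rho)M_{\mc U}\1\|^2}.$$

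The one point that requires care is the strict positivity of the numerator. Here I would use that $R(\rho)$ is entrywise nonnegative (from the Neumann series) and that $M_{\mc U}$ is nonnegative, so both $R(\rho)M_{\mc U}e_i$ and $R(\rho)M_{\mc U}\1$ are nonnegative vectors; expanding $M_{\mc U}\1=\sum_{j\in\mc U}M_{\mc U}e_j$ and discarding all but the $j=i$ term gives the lower bound $(R(\rho)M_{\mc U}e_i)'(R(\rho)M_{\mc U}\1)\ge\|R(\rho)M_{\mc U}e_i\|^2$. Finally, the $i$-th column $M_{\mc U}e_i$ is nonzero, since under Assumption \ref{assumption:Hirreducible} source $i$ cannot be isolated in $\mc H$ and hence influences at least one regular agent; as $R(\rho)$ is invertible, $R(\rho)M_{\mc U}e_i\neq0$ and the bound is strictly positive. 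This gives $\frac{d}{dc}\nu_i^{\mc U}(c)>0$ for every $i\in\mc U$, which is the claim. I expect this strict-positivity argument to be the only delicate step; the reparametrization and the computation of $g'(\rho)$ are routine once the nonnegativity of $R(\rho)$ has been recorded.
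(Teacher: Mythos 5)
Your proposal is correct and follows essentially the same route as the paper: both rest on the closed form of Proposition \ref{prop:unconst2}, the Neumann expansion of $(\rho I-M_{\mc W}[d]^{-1}M_{\mc W}')^{-1}$, and the fact that the optimal value $\rho=\phi(\nu^{\mc U}(c))$ is strictly decreasing in $c$, so that each component $\nu^{\mc U}_i(c)$ inherits strict monotonicity. The only difference is one of rigor rather than of method: the paper simply asserts that $\rho$ decreases in $c$ and concludes by term-by-term monotonicity of the series, whereas you derive $\rho'(c)=1/g'(\rho)<0$ from the budget relation and verify the strict positivity of the numerator via the nonzero column $M_{\mc U}e_i$ — details the paper leaves implicit.
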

\begin{proof} 
Put $\nu^{\mc U}(c)=(\tilde\nu, d)$. 
From the first relation of (\ref{nu^o}), we can write $$\tilde\nu=\sum_{j=0}^{\infty} \frac{1}{(\phi(\nu^{\mc U}(c)))^{j+1}}\,M_{\mc U}'\l(M_{W}[d]^{-1}M_{\mc W}'\r)^jM_{\mc U}\1\,.$$ 
Since the value function $\phi(\nu^{\mc U}(c))$ is strictly decreasing in $c$, the result is proven.
\end{proof}

\begin{proof}[Proof of Theorem  \ref{theo:main-2}]
We start with continuity. 
Fix $c\geq \|d\|_1$ and consider a sequence $c_k\to c$ (with $c_k\geq \|d\|_1$) such that $\nu^*(c_k)\to\bar \nu$ for $k\to +\infty$. 
Consider now any sequence $\bar \nu_k\in\mc Q_{c_k}$ such that $\bar \nu_k\to \nu^*(c)$ for $k\to +\infty$. 
Since $\phi(\nu)$ is continuous, we have that $\phi(\nu^*(c_k))\to \phi(\bar \nu)$ and $\phi(\bar \nu_k)\to \phi(\nu^*(c))$.
Since by construction $\phi(\nu^*(c_k))\leq \phi(\bar \nu_k)$ for every $k$, it holds $\phi(\bar \nu)\leq \phi(\nu^*(c))$ and thus $\phi(\bar \nu)=\phi(\nu^*(c))$. 
Since $\phi(\nu)$ is strictly convex, it follows that $\bar \nu=\nu^*(c)$. 
This proves continuity.

We then prove monotonicity. 
Given any $\bar c\geq \|d\|_1$, we now show that $\mc W_c=\mc S\backslash\mc U_c$ is locally constant in a left and, respectively, in a right neighborhood of $\bar c$. 
To this aim, we consider
$$s_{\bar c}^-=\liminf\limits_{c \to \bar c-}|\mc W_{c}|\quad s_{\bar c}^+=\liminf\limits_{c\to \bar c+}|\mc W_{c}|.$$
Since $s_{\bar c}^-$ and $s_{\bar c}^+$ are integer-valued, there exists $\delta>0$ such that
\beq\label{inf1}\begin{array}{ll}
 |\mc W_{c}|\geq s_{\bar c}^-\; &\forall c\in [\bar c-\delta, \bar c[\\[3pt]
 |\mc W_{c}|\geq s_{\bar c}^+\; &\forall c\in ]\bar c,\bar c+\delta].
\end{array}\eeq
Consider any  $c_1\in [\bar c-\delta, \bar c[$ such that $|\mc W_{c_1}|=s_{\bar c}^-$. 
Since $\nu^*(c)$ is continuous, there exists $c_2>c_1$ such that $\mc W_{c}\subseteq \mc W_{c_1}$ for every $c\in [c_1, c_2[$.
Suppose we have chosen the supremum of such $c_2\leq\bar c$. 
By the way $c_1$ was chosen and the first inequality in (\ref{inf1}), we actually have that $\mc W_{c}= \mc W_{c_1}$ for $c\in [c_1, c_2[$. 
Since $\mc W_c$ is constant on $[c_1, c_2[$, it follows by continuity that $\mc W_{c_2}\supseteq \mc W_{c_1}$.
Applying monotonicity of Lemma \ref{lemma:min-high} we thus have $\mc W_{c_2}= \mc W_{c_1}$. 
If $c_2<\bar c$, repeating the same argument, we could further extend the interval $[c_1, c_2[$ on the right where $\mc W_c$ remains constant in contradiction with the way $c_2$ was chosen. 
Therefore $c_2=\bar c$ and we have proven that $\mc W_c$ is constant on $[c_1,\bar c]$. 
We now consider the right neighborhood. 
We fix any  $c_1\in ]\bar c, \bar c+\delta [$ such that $|\mc W_{c_1}|=s_{\bar c}^+$ and arguing as above we determine an interval $[c_1, c_2]$ on which $\mc W_c$ remains constant. 
By the definition of $s_{\bar c}^+$, we can fix $c_1$ arbitrarily close to $\bar c$ and this proves that $\mc W_c$ is constant on an interval $]\bar c, c_2]$. 
Lemma \ref{lemma:min-high} guarantees that on the two intervals $[c_1, \bar c]$ and $]\bar c, c_2]$ the optimal solution $\nu^*(c)$ is non-decreasing. 
Being continuous, it is non-decreasing in the neighborhood $[c_1, c_2]$ of $\bar c$. 
Finally, being locally non-decreasing and continuous, $\nu^*(c)$ is globally non-decreasing on $[\|d\|_1, +\infty[$. 
\end{proof}

\end{document}